\declaretheorem{theorem}
\declaretheorem{corollary}
\theoremstyle{definition}
\newtheorem{definition}{Definition}
\newtheorem*{definition*}{Definition}
\title{Limitations of measure-first protocols in quantum machine learning}
\author[1]{Casper Gyurik 
\thanks{\href{mailto:c.f.s.gyurik@liacs.leidenuniv.nl}{c.f.s.gyurik@liacs.leidenuniv.nl}}}
\author[1]{Riccardo Molteni
\thanks{\href{mailto:r.molteni@liacs.leidenuniv.nl}{r.molteni@liacs.leidenuniv.nl}}}
\author[1]{Vedran Dunjko
\thanks{\href{mailto:v.dunjko@liacs.leidenuniv.nl}{v.dunjko@liacs.leidenuniv.nl}}}
\affil[1]{{\small applied Quantum algorithms (aQa), Leiden University, The Netherlands}} 
\date{}
\begin{document}

\maketitle

\begin{abstract}

In recent works, much progress has been made with regards to so-called randomized measurement strategies, which include the famous methods of classical shadows and shadow tomography. 
In such strategies, unknown quantum states are first measured (or “learned”), to obtain classical data that can be used to later infer (or “predict”) some desired properties of the quantum states. 
Even if the used measurement procedure is fixed, surprisingly, estimations of an exponential number of vastly different quantities can be obtained from a polynomial amount of measurement data.
This raises the question of just how powerful ``measure-first'' strategies are, and in particular, if all quantum machine learning problems can be solved with a measure-first, analyze-later scheme. 
This paper explores the potential and limitations of these measure-first protocols in learning from quantum data. 
We study a natural supervised learning setting where quantum states constitute data points, and the labels stem from an unknown measurement. 
We examine two types of machine learning protocols: ``measure-first'' protocols, where all the quantum data is first measured using a fixed measurement strategy, and
``fully-quantum'' protocols where the measurements are adapted during the training process. 
Our main result is a proof of separation. 
We prove that there exist learning problems that can be efficiently learned by fully-quantum protocols but which require exponential resources for measure-first protocols. 
Moreover, we show that this separation persists even for quantum data that can be prepared by a polynomial-time quantum process, such as a polynomially-sized quantum circuit.
Our proofs combine methods from one-way communication complexity and pseudorandom quantum states. 
Our result underscores the role of quantum data processing in machine learning and highlights scenarios where quantum advantages appear.

\end{abstract}

\section{Introduction}
\label{sec:intro}

A central question in quantum machine learning revolves around understanding the various types of advantages one can achieve by exploiting quantum effects. 
Some of the most interesting scenarios arise when the dataset itself comprises quantum states, which can then be processed fully coherently, or through elaborate measurement strategies. 
In this context, exponential advantages have been identified when coherent measurements of multiple copies of a given quantum state are allowed~\cite{chen:memory, huang:science, huang:nature}. 
In a parallel related line, there have been significant breakthroughs in extracting useful classical information from quantum states using the versatile toolkit of randomized measurements~\cite{elben:review}. 
This toolkit includes the groundbreaking concept of classical shadows~\cite{huang:classicalshadows, huang:science}, which can extract an efficient classical description of quantum states that allows one to compute various physical properties. 
These two distinct research lines raises questions about the different capabilities of quantum machine learning protocols that employ coherent manipulation of quantum states and adaptive measurements compared to machine learning protocols that use a universal measurement strategy to extract valuable classical data. 
\textit{In particular, these results raise the natural question of whether it is possible that a ``measure-first'' protocol can be universally used as a substitute for any ``fully-quantum'' protocol in quantum machine learning tasks.}
We formally define what we mean by a ``measure-first'' or a ``fully-quantum'' protocol in Section~\ref{subsec:problem_models} (see Definition~\ref{def:fully-quantum} and Definition~\ref{def:measure-first}), and we provide an overview in Figure~\ref{fig:intro_overview}. 
Intuitively, in a measure-first protocol, one first measures the quantum state independently from the specific task you want to use the extracted information for later on  (i.e., ``measure first, ask later'').
In contrast, a fully quantum protocol performs full quantum processing of the input states, allowing the measurements to adapt to the particular instance of the learning problem.

In this paper, we shed light on the limitations of measure-first protocols involving universal measurement strategies by raising the question whether there exists learning problems for which no measure-first protocol is as powerful as a fully quantum one.
We remark that outside the domain of machine learning, affirmative responses to this question are already known, such as in (distributed) sampling tasks~\cite{montanaro:fourier} or in the context of relational problems~\cite{aaronson:qubit_coin}.
In this work instead, we study a natural quantum version  of the standard supervised learning setting~\cite{aimeur:quantum_data}, where the input consists of multiple copies of a quantum state and labels correspond to outcomes of some unknown measurement of the quantum states. 
In other words, each instance of the learning problem corresponds to a distinct, undisclosed measurement. 
The objective is to learn how to reproduce this measurement from the data in such a way that when presented with a new quantum state, the learning protocol can generate the correct outcome with the right probabilities.
In this setting, we prove that there exists a learning problem for which there is an exponential separation in the training data required by the two protocols to successfully complete the task.

\begin{figure*}[h!]
\centering
\includegraphics[width=0.85\linewidth]{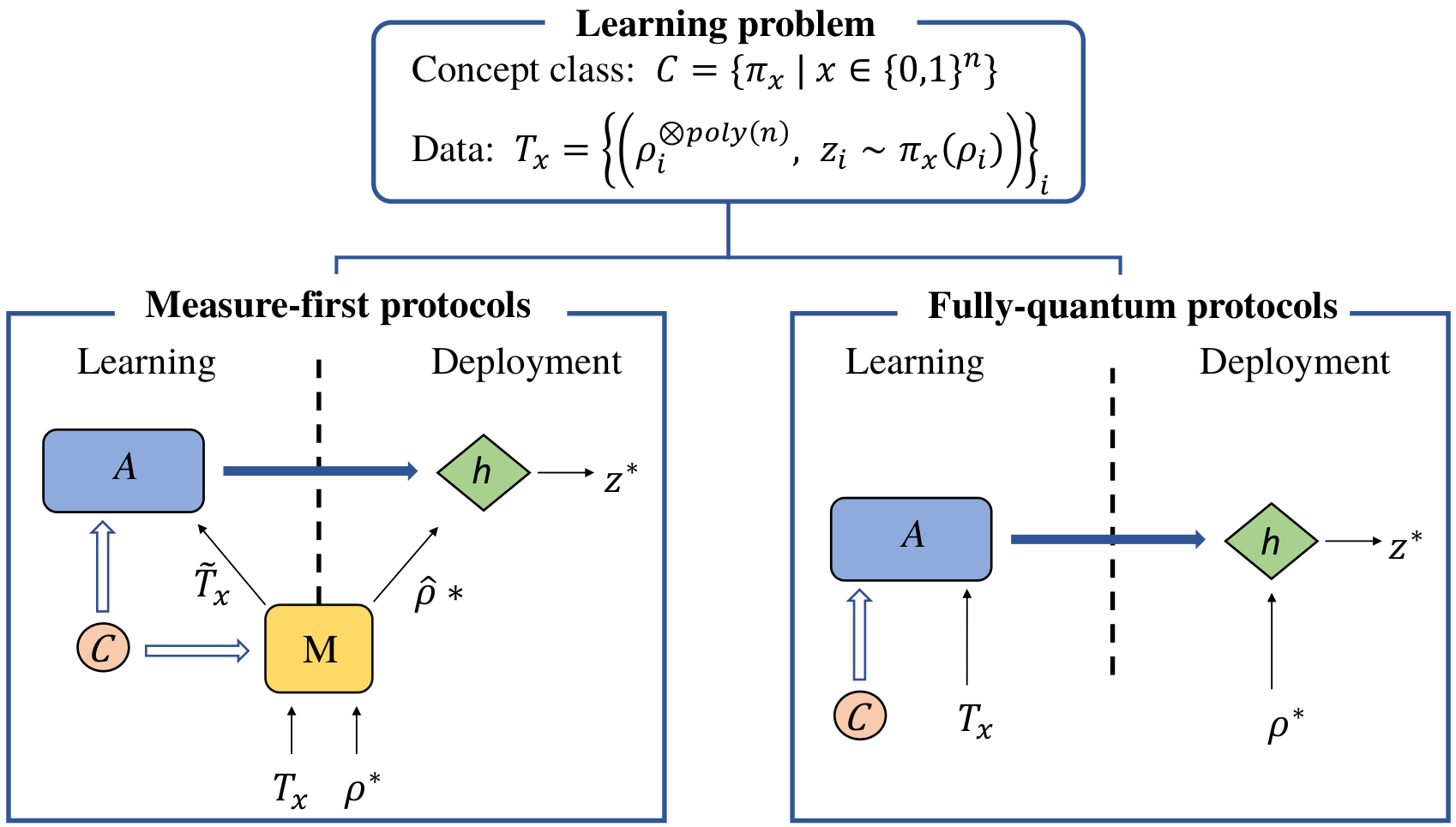}
\caption{An overview of the quantum machine learning protocols explored in our paper. 
In both protocols, the quantum algorithm $A$ is tasked with processing training data $T_x$ to output a classical description of a function $h$ that correctly produces samples $z^*$ for new input states $\rho^*$ (i.e., $h$ should implement the quantum measurement). 
The distinction between the two protocols lies in fact that in the \textit{measure-first} protocol the quantum input states undergo a randomized measurement strategy $M$, which converts a quantum state $\rho^*$ into a classical representation $\hat{\rho}^*$.
Importantly, this strategy is allowed to depend only on the concept class $\mathcal{C}$ (which captures the general learning problem), not on the specific target concept $\pi_x \in \mathcal{C}$. 
In other words, the same randomized measurement strategy $M$ is used to preprocess the quantum data for all concepts $\pi_x\in \mathcal{C}$.} 

\label{fig:intro_overview}
\end{figure*}

\subsection{High-level overview of learning setting and main result}

The machine learning problem concerns learning an unknown measurement acting on a set of input quantum states. 
Specifically, the data that the learning protocol gets consists of pairs of copies of $n$-qubit quantum states $\rho$ drawn from some distribution $\mathcal{D}$ together with a corresponding label $z\in\{0,1\}^{2n+1}$. 
The first $n$ bits of $z$ encode a $2^n$-outcome POVM measurement $\Lambda_x = \{E^x_z \mid z \in \{0,1\}^n\}$ from some set $\Lambda = \{\Lambda_x \mid x \in \{0,1\}^n\}$. 
The remaining $n+1$ bits are determined by the outcome of $\Lambda_x$ on the quantum state $\rho$. 
The goal of the machine learning protocol is to ``learn'' how to reproduce the measurement $\Lambda_x$. 
More precisely, the trained learning protocol has to receive as input an unseen quantum state $\rho'$ and output a sample $z'$ in agreement with the probability distribution $\pi_x(\rho)$, where $\pi_x(\rho)$ is the distribution of the measurement outcomes of $\Lambda_x$.

This paper explores whether solving learning problems such as the above requires a quantum computer capable of adaptive measurements on the training data, or if a fixed measurement strategy that produces classical representations of the quantum states is sufficient.
Specifically, we consider so-called ``\textit{measure-first protocols}'' that are forced to measure the input states and use the obtained classical description to train the machine learning model to produce samples from the target distribution. 
Importantly, in a measure-first protocol the measurements are not allowed to depend on the training data, but are unconstrained otherwise.
On the other hand, we consider so-called ``\textit{fully-quantum protocols}'' that can coherently process the quantum states and adjust the measurements based on the training data. 
Our main result is the existence of a learning problem where the quantum data is efficiently generatable on a quantum computer for which a ``measure-first protocol'' requires an exponential amount of data to be able to produce measurement outcomes on new input states, whereas a fully-quantum protocol only requires a polynomial amount of data. 
Additionally, we require that the protocols must be efficient in the sense that they run in time polynomial in $n$.
Using the notion of learning we presented above, we now give an informal description of the main result of this paper. 

\begin{theorem}
\textnormal{(Informal)} There exist a concept class $\mathcal{C}$, defined by a set of measurements and a distribution over quantum states $\mathcal{D}$ such that a ``measure-first'' protocol cannot learn $\mathcal{C}$ with a polynomial amount of training data. 
On the other hand, there exists a ``fully-quantum'' protocol which can learn $\mathcal{C}$ efficienctly with respect to both sample and time complexity .  
\end{theorem}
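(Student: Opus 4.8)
The plan is to derive the separation from a one-way communication lower bound and to use pseudorandom states to make the hard instances efficiently preparable. The starting observation is that any measure-first protocol induces a one-way communication protocol of a very rigid form: because the measurement strategy $M$ must be chosen knowing only the concept class $\mathcal{C}$ and not the target concept $\pi_x$, we may view $M$ as a sender (``Alice'') who holds the quantum state $\rho$ and emits a classical message $\hat\rho$ that cannot depend on $x$, while the subsequent analysis and test phase is a receiver (``Bob'') who holds $x$ and must output a sample from $\pi_x(\rho)$. The total number of classical bits Alice can transmit is at most the number of measured copies times $\mathrm{poly}(n)$ bits per outcome, so a $2^{\Omega(n)}$ lower bound on the classical one-way communication of the underlying task immediately forces exponential sample complexity for every measure-first protocol. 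Dually, a fully-quantum protocol corresponds to giving Bob coherent access to the state, so that he may perform the $x$-dependent measurement $\Lambda_x$ directly; this is exactly the quantum one-way setting, where the cost is only $\mathrm{poly}(n)$.

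First I would fix a concrete task $F(a,x)$ -- of sampling or relational type -- that exhibits the largest available gap between quantum and classical one-way communication, and realize it inside the learning framework: the data states $\rho_a$ encode Alice's input $a$, the measurements $\{\Lambda_x\}_{x\in\{0,1\}^n}$ implement the $x$-indexed basis in which Bob measures, and the induced outcome distribution $\pi_x(\rho_a)$ reproduces $F(a,x)$. With this encoding the fully-quantum upper bound is essentially immediate: the protocol extracts $x$ from the training labels, and since each $\Lambda_x$ is efficiently implementable given $x$, it simply applies $\Lambda_x$ to every new test state, reproducing $\pi_x$ exactly with $\mathrm{poly}(n)$ copies and $\mathrm{poly}(n)$ time. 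Note that the fully-quantum protocol never needs to \emph{break} any pseudorandomness; it reads $x$ off the label and applies a known measurement, so the states are required only to be efficiently preparable and to carry the right outcome structure under $\{\Lambda_x\}$.

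The measure-first lower bound is where the communication bound does the work. I would first argue that if the states $\rho_a$ were Haar-random carriers of $a$, then no short classical message $\hat\rho$ could suffice for all $x$ simultaneously, because the outcome patterns produced by $\{\Lambda_x\}_x$ on $\rho_a$ are mutually incompressible; quantifying this via the one-way communication complexity of $F$ yields the $2^{\Omega(n)}$ sample bound in the information-theoretic setting. To obtain the stronger claim -- that the separation survives for efficiently preparable data -- I would replace the random carriers by a pseudorandom state ensemble indexed by the seed $a$. The key point is a hybrid indistinguishability argument: an efficient measure-first protocol that learned $\mathcal{C}$ from sub-exponentially many pseudorandom copies could be turned into a distinguisher between the ensemble and Haar-random states, since by the information-theoretic bound it must fail on Haar-random inputs while by assumption it succeeds on pseudorandom ones. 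This contradicts pseudorandomness, so the bound upgrades from information-theoretic to computational for the efficient, efficiently-preparable instantiation.

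The main obstacle I anticipate is constructing a single task $F$ (equivalently, a single family $\{\Lambda_x\}$ together with $\mathcal{D}$) that meets all three requirements at once: it must have a genuinely exponential classical one-way lower bound, it must remain efficiently solvable by the adaptive fully-quantum protocol, and it must be compatible with a pseudorandom-state encoding so that the substitution step both preserves efficient preparability and cleanly transfers the lower bound. A second delicate point inside this obstacle is the reduction itself: since the generated labels are themselves samples from $\pi_x$, I must ensure that handing the distinguisher both the states \emph{and} their labels does not inadvertently leak the information the lower bound relies on, so that the ``fails on Haar, succeeds on pseudorandom'' dichotomy remains valid. Reconciling the incompressibility needed for the lower bound with the structure needed for the quantum upper bound, while keeping everything inside the pseudorandomness black box, is where I expect the bulk of the technical work to lie.
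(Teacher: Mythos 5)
Your proposal follows essentially the same route as the paper: the paper instantiates your task $F$ as the Hidden Matching problem of Bar-Yossef, Jayram and Kerenidis (with phase states $\ket{\psi_f}$ as data and the relation $f(y)\oplus f(y\oplus x)=b$ as labels), proves the measure-first lower bound by exactly your reduction to classical one-way communication (plus a Yao-principle derandomization step you gloss over), obtains the fully-quantum upper bound by reading $x$ off a label and applying the known efficient measurement $U_x$, and upgrades to efficiently preparable data via your pseudorandomness-distinguisher argument, using phase states of quantum-secure pseudorandom functions rather than a Haar-random/PRS ensemble. Even the ``label leakage'' worry you raise dissolves in the paper's setup, since Bob generates the training data himself from functions independent of Alice's test input.
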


This result suggests that the number of bits required to store $n$ qubits, such that it can be used by a learner to successfully solve the learning problem, is exponential in $n$. 
Such a conclusion is analogous to what Montanaro refers as ``anti-Holevo'' theorems~\cite{montanaro:fourier}.

\subsection{Related work}
\label{subsec:related_work}

In this section we discuss related works and highlight their relationships to our learning setting.
Firstly, in~\cite{huang:classicalshadows}, the authors introduced a randomized measurement technique tailored to extract a classical description of a quantum state $\rho$. 
This description enables the computation of the expectation values of any set of observables $\{O_k\}_{k=1}^M$ -- provided they have a low ``shadow norm'', such as when the observables are local -- up to a precision of $\epsilon$.
Notably, they showed that a number of copies of $\rho$, scaling logarithmically with the number of observables $M$ and inverse-polynomially in the precision $\epsilon$, suffices for this task. 
Directly applying these techniques to our learning setting that is focused on learning a $2^n$-outcome POVM by estimating the probability of each possible measurement outcome $z \in \{1,2,...,2^n\}$ requires exponential precision $\epsilon$.

The concept of shadow tomography, introduced in~\cite{aaronson:shadowtomography}, revolves around the problem of computing the expectation values of any set of $M$ two-outcome measurements on an $n$-qubit state $\rho$ up to precision $\epsilon$. 
It has been shown that this can be done using a number of copies of $\rho$ that scale polylogarithmically in $M$, linearly in $n$, and inverse-polynomially in $\epsilon$~\cite{aaronson:shadowtomography}. 
In contrast to the methods in~\cite{huang:classicalshadows}, the approach in~\cite{aaronson:shadowtomography} requires coherent measurements on multiple copies of $\rho$. 
Additionally, as demonstrated in~\cite{huang:science, chen:exponential} for specific tasks, the capacity to coherently measure multiple copies of quantum states provides an exponential advantage in sample complexity over sequential measurements. 
Considering our framework, where measurements can act coherently on multiple copies of each input state, one might question whether such strategies enable a measure-first protocol to solve our learning task. 
However, it is important to note that coherent measurements do not improve the scaling with respect to the precision $\epsilon$, leaving room for the possibility of a separation.

In~\cite{aaronson:distributions}, the authors extended the concept of shadow tomography to the scenario of learning a $K$-outcome POVM (for $K\geq2$) selected from a set of $M$ unknown quantum measurements. 
In contrast to the binary outcome case, the goal now is to approximate an unknown distribution up to precision $\epsilon$ in total variation distance, rather than focusing on expectation values.
Their procedure requires a number of copies of the quantum state scaling linearly with $K$ and $n$, polylogarithmically with $M$, and inverse-polynomially with the precision $\epsilon$. 
Moreover, they establish the optimality of this scaling with respect to the dependence on the number of outcomes $K$.
This result prompts the question of whether our separation between measure-first and fully-quantum protocols can be directly inferred from it. 
However, in establishing their lower bound, it is important to note that no assumptions were made regarding the complexity of the unknown quantum state. 
The crux of our study lies in demonstrating that the measure-first protocol falls short of replicating the unknown measurement, even on quantum states that are efficiently preparable. 
It is moreover important to highlight that while their shadow tomography procedures can be employed to construct a measure-first protocol by ``shadowfying'' input states to approximate the expected values of each measurement outcome, this is not strictly necessary for our task. 
Specifically, understanding the probability of each outcome allows for the creation of an ``evaluator'' that can compute the correct probability for every outcome. 
However, to resolve our learning problem, a ``generator'' (i.e., an algorithm generating samples with the correct probabilities) already suffices, and it does not necessarily require computing the output probabilities~\cite{sweke:learnability}.

In~\cite{cheng:learnability}, the authors provide upper bounds for the dual problem of shadow tomography, or more specifically the problem of learning a measurement. 
In particular, they studied the task of learning an unknown two-outcome POVM denoted $\{E, I - E\}$, from data of the form $\{(\rho_i, \text{Tr}(E \rho_i))\}_{i=1}^{m}$. 
They showed that to approximate the unknown $E$ up to a precision of $\epsilon$ on new $n$-qubit input states, $\mathcal{O}(2^n/\epsilon^2)$ training samples are sufficient. We remark, however, that the number of required samples scales exponentially with the number of qubits.

Finally, in~\cite{jerbi:power} the authors study quantum process learning, where the task is to learn an unknown unitary $U$, from data of the form $\{\ket{\psi_i},U\ket{\psi_i}\}$. 
In particular, they study the limitations of what they call incoherent learning, where the learner is constraint to first measure multiple copies of the data $U\ket{\psi_i}$. 
While they therefore also study the problem of extracting classical information from quantum data and utilizing it in the learning process, the setting in their work differs from ours. Namely, the quantum states in our scenario are labeled by a sample obtained from the unknown measurement process, whereas in~\cite{jerbi:power} the labels are the input quantum states when evolved under some unknown target unitary.


\section{Main result}
\label{sec:main_result}

In this section, we present the key findings of our paper. 
We begin in Section~\ref{subsec:problem_models} by defining the learning problem we study and we introduce the two types of learning models we analyze: one involving a universal randomized measurement strategy (i.e., ``\textit{measure-first quantum machine learning}''), and the other using adaptive measurements that are trained separately on each problem instance (i.e., ``\textit{fully-quantum machine learning}''). 
Afterwards, in Section~\ref{subsubsec:learnability}, we show how the fully-quantum machine learning model can efficiently solve our learning problem. 
In Section~\ref{subsec:unlearnability}, we present our first main result showing that no measure-first quantum machine learning model can solve our learning problem efficiently.
Finally, in Section~\ref{subsec:unlearnability}, we show that this separation between the models still holds if the quantum states in the data are efficiently preparable.

\subsection{The learning problem and learning models}
\label{subsec:problem_models}

The learning problem we study is the learning of a measurement. 
In particular, it involves generating samples from a distribution induced by measuring an (unknown) POVM measurement $\Lambda_x$ on $n$-qubit quantum states. 
The (unknown) target measurement $\Lambda_x$ is drawn from a set of POVMs $\{\Lambda_x \mid x \in \{0,1\}^n\}$, and each measurement $\Lambda_x$ is a computational basis measurement preceded by an $n$-qubit unitary $U_x$, i.e., 
\[
\Lambda_x=\left\{E^x_z \big|\text{  }z = 1, \dots, 2^n\right\}, \text{ where } E^x_z= U_x\ket{z}\bra{z}U^\dagger_x.
\]
During training the learner is given a set of examples $T_x$, where each example consists of a polynomial number of copies of a phase state $\ket{\psi_{f'}}$ together with a sample from the associated POVM-induced distribution. 
This is a special case of labeled quantum data, which was introduced in~\cite{aimeur:quantum_data}, where we are additionally allowed to have access to polynomially many copies of the input quantum state.
We formalize our learning problem by generalizing the standard PAC learning framework~\cite{valiant:pac}.
In our generalization, a concept corresponds to a quantum randomized function, i.e., a function that on each quantum input state outputs a sample from a random variable (which in our case corresponds to the outcomes of a POVM on the input quantum state). 
Before we define the concept class studied throughout this paper, we first setup some auxilliary definitions.
\begin{definition}[Auxiliary definitions/notation]\label{def:aux}
\hspace{0pt}
\begin{itemize}
    \item Let $N = 2^n$, or equivalently $n = \log_2 N$. 
    \item We identify a function $f:\{0,1\}^n \rightarrow \{0,1\}$ with its truth table $f\in\{0,1\}^N$, and we denote its corresponding \textit{phase state} with 
    \begin{align}
        \label{eq:phase_state}
        \ket{\psi_f} = \frac{1}{\sqrt{N}}\sum_{i=1}^{N}(-1)^{f_i}\ket{i}.
    \end{align}
    \item Let $\mathcal{S}_{\mathrm{phase}} = \{\ket{\psi_f}^{\otimes \ell} \mid f \in \{0,1\}^N\}$ denote the set of $\ell = \mathrm{poly}(n)$ copies of $n$-qubit phase states.
    \item We write $x \sim \pi$ to denote that $x$ was drawn according to a distribution $\pi$.
    \item We write $\mathcal{U}(\mathcal{X})$ for the uniform distribution over a set $\mathcal{X}$.
    \item We write $\Delta(\mathcal{X})$ for the set of all distributions over a set $\mathcal{X}$.
\end{itemize}
\end{definition}

\begin{definition}[Concept class]\label{def:concepts}
We define our concept class as $\mathcal{C} = \{ \pi_x \mid x \in \{0, 1\}^n\}$, such that
\begin{equation}
\begin{split}
 \label{eq:concepts}
    \pi_x: \mathcal{S}_{\mathrm{phase}} &\rightarrow \Delta(\{0,1\}^{2n+1})\\
    \ket{\psi_f}^{\otimes \ell}&\mapsto \pi_x(f)
    \end{split}
    \end{equation}
    where $\pi_x(f)$ is a distribution over samples $(x, y, b)$, where $(y,b)\sim  \mathcal{U}(R_{f}(x))$ and 
    \begin{equation}
    \label{eq:relation}
    R_f(x) = \{(x, y, b) \mid x,y \in \{0,1\}^n,\text{ }b \in \{0,1\},\text{ }f(y) \oplus f(y \oplus x) = b\}.
    \end{equation}
\end{definition}

In particular, $\pi_x$ is a randomized function which takes as input a polynomial number of copies of a phase state and outputs a sample $z$ consisting of $x \in \{0,1\}^n$ together with some $(y,b) \in \{0,1\}^{n+1}$ drawn from the uniform distribution over $R_f(x)$. 
Importantly, in~\cite{aaronson:qubit_coin} the authors showed that for each $x \in \{0,1\}^n$ there exist a POVM measurement $\Lambda_x$ which when applied to a phase state $\ket{\psi_f}$ outputs a pair exactly satisfying the relation $R_f(x)$. 
With regards to our learning problem, the task of the learning protocols is to learn this measurement.
In short, a learner is given several evaluations of the randomized function $\pi_x$ in the form of training data and its objective is to implement a randomized function $\widetilde{\pi}_x$ that closely approximates $\pi_x$ on most input states. 
In this paper, we compare two categories of machine learning systems that can tackle problems of this type. 
First, we introduce what we call a ``\textit{fully-quantum protocol}''.

\begin{definition}[Fully-quantum protocol]
\label{def:fully-quantum} 
A \textit{fully-quantum protocol} for the concept class $\mathcal{C}$ in Definition~\ref{def:concepts} is a polynomial-time quantum algorithm~$A$ that takes as input training data of the form 
\begin{align}
\label{eq:qdata}
    T_x = \left\{\left(\ket{\psi_{f^{(i)}}}^{\otimes \ell}, (x, y, b)\right) \; \big| \;(x, y, b)\sim\pi_x(f^{(i)})\text{ and } f^{(i)} \sim \mathcal{U}\left(\{0,1\}^N\right)  \right\}_{i=1}^{\text{poly}(n)},
\end{align}
and outputs a \textit{classical description of a polynomial-time quantum algorithm} that on input $\ket{\psi_f}^{\otimes \ell} \in \mathcal{S}_{\mathrm{phase}}$ generates a sample from a distribution $\widetilde{\pi}_x(f) \in \Delta(\{0,1\}^{2n+1})$. 
\end{definition}

We emphasize that for a ``fully quantum'' protocol, the learning algorithm must produce a \textit{classical description} of the quantum algorithm generating samples from $\widetilde{\pi}_x$. 
Consequently, we do not store any quantum states from the training data in quantum memory, which would be more general but not studied in this paper.
Ultimately, the goal of the protocol is to implement a randomized function $\widetilde{\pi}_x$ that closely approximates the actual data-generating randomized function $\pi_x$ for most of the input quantum states.

\begin{definition}[$(\epsilon, \delta, p_{\mathrm{succ}})$-fully-quantum learnable]
\label{def:fully-quantum_eff}
We say that $\mathcal{C}$ is \textit{$(\epsilon, \delta, p_{\mathrm{succ}})$-fully-quantum learnable} if there exists a fully-quantum protocol $A$  such that for every $\pi_x \in \mathcal{C}$, with probability at least $p_{\mathrm{succ}}$ we have
\begin{align}
\label{eq:success_criteria_q}
\mathsf{Pr}_{f \sim_U \{0,1\}^N}\left(|| \widetilde{\pi}_x\left(f\right) - \pi_x(f) ||_{TV} \leq 1-\epsilon \right) \geq \delta,
\end{align}
where $\widetilde{\pi}_x(f)\in\Delta(\left\{0,1\right\}^{2n+1})$ denotes the distribution that the polynomial-time quantum algorithm obtained from the learning algorithm $A$ generates samples from on input $\ket{\psi_f}^{\otimes \ell} \in \mathcal{S}_{\mathrm{phase}}$.
\end{definition}

Next, we introduce a ``\textit{measure-first protocol}'' which consists of two components: (i) a randomized measurement strategy $M$, and (ii) a learning algorithm $A$.
The main difference between a measure-first protocol and a fully-quantum protocol is that the former involves a randomized measurement procedure that first measures the quantum states before putting it into a learning algorithm. 
Importantly, the measure-first protocol is allowed to perform arbitrary coherent measurements on all input quantum states (i.e., the polynomially-many copies of the phase states). 
The only constraint is that the \textit{measurement strategy cannot depend on the specific target concept} of the learning problem.
In short, a randomized measurement strategy is a polynomial-time algorithms that maps a polynomial number of copies of a phase state $\ket{\psi_f}$ to some classical description $\widehat{\psi}_f \in \{0,1\}^{m}$ for some $m = \mathrm{poly}(n)$. 
These classical descriptions $\widehat{\psi}_f$ are then used as input for the learning algorithm, that is tasked with implementing a randomized function close to $\pi_x$. 

\begin{definition}[Measure-first protocol]
\label{def:measure-first}
A \textit{measure-first protocol} is a tuple $(M, A)$ where
    \begin{itemize}
        \item $M$ is a measurement strategy that in time $\mathcal{O}(\mathrm{poly}(n))$ maps $\ket{\psi_f}^{\otimes \ell} \in \mathcal{S}_{\mathrm{phase}}$ to some $\widehat{\psi}_{f}\in \{0,1\}^{m}$, where $m = \mathrm{poly}(n)$.
        \item $A$ is a polynomial-time quantum algorithm that takes input of the form
        \begin{align}
            \label{eq:cdata}
                T_x^M = \left\{\left(\widehat{\psi}_{f^{(i)}}, (x, y, b)\right) \; \big| \;(x, y,b)\sim\pi_x(f^{(i)})\text{ and } f^{(i)} \sim \mathcal{U}\left(\{0,1\}^N\right)  \right\}_{i=1}^{\text{poly}(n)},
        \end{align}
        and outputs a \textit{description of a polynomial-time quantum algorithm} that on input $\widehat{\psi}_f = M(\ket{\psi_f}^{\otimes \ell})$ generates a sample from a distribution $\widetilde{\pi}_x(f) \in \Delta(\{0,1\}^{2n+1})$. 
    \end{itemize}
\end{definition}

Note that the distinction between measure-first and fully-quantum protocols lies in Eq.~\eqref{eq:cdata}, where the data is measured instead of remaining quantum states. 
Nonetheless, the measurement strategy $M$ is entirely arbitrary and fully unrestricted.
Recall that the objective of the protocol is to implement a randomized function $\widetilde{\pi}_x$ that closely approximates the actual data-generating randomized function $\pi_x$ on most inputs.

\begin{definition}[$(\epsilon, \delta, p_{\mathrm{succ}})$-measure-first learnable]
\label{def:measure-first_learnable}
We say that $\mathcal{C}$ is \textit{$(\epsilon, \delta, p_{\mathrm{succ}})$-measure-first learnable} if there exists a measure-first protocol $(M, A)$  such that for every $\pi_x \in \mathcal{C}$, with probability at least $p_{\mathrm{succ}}$ we have
\begin{align}
\label{eq:success_criteria_c}
\mathsf{Pr}_{f \sim_U \{0,1\}^N}\left(|| \widetilde{\pi}_x(f) - \pi_x(f) ||_{TV} \leq 1-\epsilon \right) \geq \delta,
\end{align}
where $\widetilde{\pi}_x(f)\in\Delta(\left\{0,1\right\}^{2n+1})$ denotes the distribution that the polynomial-time quantum algorithm obtained from the learning algorithm $A$ generates samples from on input $\widehat{\psi}_f = M(\ket{\psi_f}^{\otimes \ell})$.
\end{definition}

\subsection{Fully-quantum learnability}
\label{subsubsec:learnability}

In this section we describe how the concept class in Definition~\ref{def:concepts} is fully-quantum learnable.

\begin{restatable}{proposition}{fullyquantum}
\label{prop:fully-quantum_learnable}
    The concept class in Definition~\ref{def:concepts} is $(0, 0, 1)$-fully-quantum learnable.
\end{restatable}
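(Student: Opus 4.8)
The plan is to exploit the crucial structural feature of the concept class that the \emph{label itself reveals the target concept}: every example in $T_x$ (see Eq.~\eqref{eq:qdata}) carries a sample $(x,y,b)$ whose first $n$ coordinates are exactly the string $x$ indexing the target POVM $\Lambda_x$. Consequently the learner never has to \emph{infer} $x$ from the quantum data; it simply reads $x$ off the first coordinate of any single label and must only output a classical description of an efficient quantum circuit that, on input $\ket{\psi_f}^{\otimes\ell}$, reproduces the distribution $\pi_x(f)$. The whole content of the proposition thus reduces to exhibiting, for each \emph{known} $x$, a polynomial-time quantum implementation of the measurement $\Lambda_x$ whose existence is guaranteed by~\cite{aaronson:qubit_coin}.

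I would give this implementation explicitly as a Hadamard-test-style circuit using a single copy of $\ket{\psi_f}$ together with one ancilla qubit. Prepare the ancilla in $\ket{0}$ and apply a Hadamard, then apply the bit-flip $X_x:\ket{j}\mapsto\ket{j\oplus x}$ on the data register controlled on the ancilla, apply a Hadamard to the ancilla again, and finally measure the data register to obtain $y$ and the ancilla to obtain $b$, outputting $(x,y,b)$. Every gate is of size $\mathrm{poly}(n)$ and depends only on the $n$-bit string $x$, so the learner can write down its description in time $\mathrm{poly}(n)$; moreover the circuit is precisely a unitary $U_x$ (on system plus ancilla) followed by a computational-basis measurement, matching the prescribed form $E^x_z=U_x\ket{z}\bra{z}U_x^\dagger$.

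The verification is a short amplitude computation. After the controlled-$X_x$ step the joint state is
\[
\frac{1}{\sqrt{2N}}\sum_{j}\ket{j}\Bigl((-1)^{f_j}\ket{0}+(-1)^{f_{j\oplus x}}\ket{1}\Bigr),
\]
and after the final Hadamard the amplitude of $\ket{j}\ket{b}$ is proportional to $(-1)^{f_j}+(-1)^{b}(-1)^{f_{j\oplus x}}$, which is nonzero exactly when $b=f(j)\oplus f(j\oplus x)$. Hence each pair $(y,b)$ satisfying $f(y)\oplus f(y\oplus x)=b$ is produced with probability precisely $1/N$, i.e.\ the output is drawn from $\mathcal{U}(R_f(x))$, so that $\widetilde{\pi}_x(f)=\pi_x(f)$ \emph{exactly} for every truth table $f$. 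Since this equality is exact and holds with certainty for all inputs, $\|\widetilde{\pi}_x(f)-\pi_x(f)\|_{TV}=0$ for all $f$, and the criterion of Definition~\ref{def:fully-quantum_eff} is satisfied for the strongest possible parameters; in particular $\mathcal{C}$ is $(0,0,1)$-fully-quantum learnable.

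I do not expect a genuine obstacle here, because the label discloses the target concept outright, collapsing the learning task to the implementation of a known measurement. The only points demanding care are to give an \emph{explicit} efficient circuit for $\Lambda_x$ (rather than invoking its abstract existence) so that both the learning algorithm $A$ and the algorithm it outputs run in time $\mathrm{poly}(n)$ and the output is a bona fide classical circuit description, and to confirm that a single copy of $\ket{\psi_f}$ already suffices so that the $\ell=\mathrm{poly}(n)$ available copies are more than enough. These checks, together with the amplitude computation above, complete the argument.
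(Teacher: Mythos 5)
Your proof is correct and takes essentially the same approach as the paper: the learner reads $x$ off the label of any single training example and outputs the classical description of an efficient circuit implementing the measurement $\Lambda_x$, which reproduces $\pi_x(f)$ exactly (zero total-variation error, with certainty, for every $f$). The only difference is that you construct the Hadamard-test circuit for $\Lambda_x$ and verify it by an explicit amplitude computation, whereas the paper invokes the circuit construction of~\cite{aaronson:qubit_coin} as a black box.
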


The proof of Proposition~\ref{prop:fully-quantum_learnable} can be found in Appendix~\ref{appendix:proof_fully-quantum_learnable}, and we provide a high-level overview of the fully-quantum protocol here.
Firstly, the fully-quantum protocol reads out $x$ from one of the samples generated by $\pi_x$ in the training data. 
Next, a quantum circuit denoted as $U_x$ is constructed as outlined in~\cite{aaronson:qubit_coin}, which when measuring $U_x\ket{\psi_f}$ in the computational basis generates a sample from $\mathcal{U}(R_f(x))$.
Crucially, it is worth noting that these quantum circuits $U_x$ are of size $\mathcal{O}\left(\mathrm{poly}(n)\right)$ and can be constructed in time $\mathcal{O}\left(\mathrm{poly}(n)\right)$. 

It might seem that little genuine learning occurs when $x$ can be readily read out from a single example in $T_x$. 
However, we can introduce various levels of learning by providing only partial information about $x$ within the examples. 
This partial information should allow the recovery of $x$ from a polynomial number of examples. 
Several examples illustrating this are discussed in more detail in Appendix~\ref{appendix:proof_fully-quantum_learnable}.

\subsection{Limitations of measure-first protocols on general quantum states}
\label{subsec:unlearnability}

In the last section, we discussed how the concept class in Definition~\ref{def:concepts} is fully-quantum learnable.
Conversely, in this section we discuss our main result which states that this concept class is \textit{not} measure-first learnable.

\begin{restatable}{theorem}{measurefirst}
\label{thm:measure-first_nonlearnable}
The concept class in Definition~\ref{def:concepts} is not $(\epsilon, \delta, p_{\mathrm{succ}})$-measure-first learnable for $\epsilon\cdot\delta> 7/8$ and any $p_{\mathrm{succ}} > 0$.
\end{restatable}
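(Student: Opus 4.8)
The plan is to reduce the existence of a successful measure-first protocol to a classical one-way communication protocol for the relation $R_f$, and then to invoke a one-way communication lower bound of Hidden-Matching type for that relation.

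First I would translate the total-variation success condition into a statement about correctly solving the relation. Since $\pi_x(f)$ is the uniform distribution over the $N$ triples $(x,y,f(y)\oplus f(y\oplus x))$, the elementary identity $\|q-p\|_{TV} = 1 - \sum_z \min(q(z),p(z))$ shows that $\|\widetilde{\pi}_x(f)-\pi_x(f)\|_{TV}\le 1-\epsilon$ forces $\widetilde{\pi}_x(f)$ to place total probability at least $\epsilon$ on triples $(x,y,b)$ satisfying $f(y)\oplus f(y\oplus x)=b$, i.e.\ the emitted pair lies in $R_f(x)$ with probability at least $\epsilon$. Combining this with the $\delta$-fraction of ``good'' $f$ guaranteed by Definition~\ref{def:measure-first_learnable}, and using that $p_{\mathrm{succ}}>0$ guarantees at least one training run producing a hypothesis $h_x$ meeting the criterion, I obtain: for every $x$ there is a fixed $h_x$ such that, over uniformly random $f$, the pair $h_x(\widehat{\psi}_f)$ lies in $R_f(x)$ with probability at least $\epsilon\delta$.

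Next I would cast this as a one-way protocol. Alice holds $f$, prepares $\ket{\psi_f}^{\otimes\ell}$, applies the target-independent measurement strategy $M$ to produce the $m=\mathrm{poly}(n)$-bit string $\widehat{\psi}_f$, and sends it to Bob; Bob holds $x$, hardwires the good hypothesis $h_x$ as non-uniform advice, and outputs $h_x(\widehat{\psi}_f)$. Here it is essential that $M$ does not depend on $x$, so Alice's message is well defined before $x$ is revealed; moreover the message is genuinely classical of length $m$ no matter how coherently $M$ acts across the $\ell$ copies, and the randomness of the measurement outcome merely makes this a randomized one-way protocol (which the lower bound below tolerates). By the previous paragraph this protocol solves $R_f$ with success probability at least $\epsilon\delta$ over uniform $(f,x)$.

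Finally I would invoke the classical one-way communication lower bound for this shift/XOR relation: any classical one-way protocol, even one in which Alice knows $f$ exactly, that outputs a pair in $R_f(x)$ with probability bounded above $1/2$ by a constant must communicate $\Omega(\sqrt{N})=\Omega(2^{n/2})$ bits --- the very relation for which the phase-state POVMs of~\cite{aaronson:qubit_coin} give an $O(n)$-qubit \emph{quantum} protocol. Since $m=\mathrm{poly}(n)=o(2^{n/2})$, any such protocol has success probability at most $7/8$ for large $n$, whence $\epsilon\delta\le 7/8$, contradicting $\epsilon\delta>7/8$. The main obstacle is precisely this communication lower bound: a naive second-moment/Fourier estimate, bounding Bob's best bias $\max_y\beta_{w,x,y}$ by the level-two Fourier $\ell_2$-mass of the message partition spread over all $N/2$ shift-pairs, is too lossy and only rules out $m=O(n)$ (it becomes vacuous once $2^m>N$). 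Obtaining the $\Omega(\sqrt N)$ threshold therefore requires the sharper Hidden-Matching machinery, controlling the concentration of the message's degree-two Fourier weight and exploiting that a uniformly random shift $x$ is very unlikely to equal the XOR of two positions on which a short message can concentrate.
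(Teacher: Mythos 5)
Your proposal is correct and takes essentially the same route as the paper's proof: you convert the TV-distance guarantee into an $\epsilon$ probability of outputting an element of $R_f(x)$, use $p_{\mathrm{succ}}>0$ to fix a good hypothesis as non-uniform advice for Bob, and reduce to a classical one-way protocol (Alice sends $\widehat{\psi}_f = M(\ket{\psi_f}^{\otimes \ell})$, which is $x$-independent) for Hidden Matching with the XOR matchings $M_x = \{(y, y\oplus x)\}$, achieving success $\epsilon\cdot\delta > 7/8$ with $\mathrm{poly}(n)$ bits and thereby contradicting the $\Omega(\sqrt{N})$ lower bound of Bar-Yossef--Jayram--Kerenidis. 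The only cosmetic difference is that the paper derandomizes explicitly via Yao's principle so as to invoke the deterministic distributional form of that lower bound, whereas you appeal to the (equivalent, by the same averaging argument) randomized form directly.
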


The proof of Theorem~\ref{thm:measure-first_nonlearnable} is provided below, and we first present a concise overview of the proof here.
At its core, the proof hinges on the notion that the existence of a measure-first protocol for the concept class described in Definition~\ref{def:concepts} implies the existence of an efficient classical one-way communication protocol for the Hidden Matching (HM) problem~\cite{bar:hm}.
Notably, in~\cite{bar:hm}, it has been shown that the HM problem cannot be solved with a communication cost of $\mathcal{O}\left(\mathrm{poly}(n)\right)$ bits, even on a $7/8$ fraction of possible inputs.
In essence, one of the two parties can employ the measurement strategy to encode their input for the HM problem, transmit it to the other party, who can then utilize the learning algorithm $A$ to successfully solve the HM problem.
Intuitively, the reason behind why measure-first learning fails in that due to~\cite{bar:hm} it is not possible to compress a phase state $\ket{\psi_f}$ into a polynomially-sized classical representation $\widehat{\psi}_f$ that contains enough information to allow one to generate samples from the distributions $\pi_x(f)$ for all possible $x$.



\begin{proof}[Proof of Theorem~\ref{thm:measure-first_nonlearnable}]

The main building block of our proof of Theorem~\ref{thm:measure-first_nonlearnable} is a result in \emph{one-way communication complexity} by Bar-Yossef, Jayram and Kerenidis~\cite{bar:hm}.
They define a problem called \emph{Hidden Matching} (HM). 
Here Alice is given a string $f \in \{0,1\}^N$, while Bob is given a perfect matching $M$ on the set $[N]$, consisting of $N/2$ edges.
Bob's goal is to output some $(i, j, f_i \oplus f_j)$ for some edge $(i, j) \in M$. 
Their main result is:
\begin{theorem}[Classical hardness of HM~\cite{bar:hm}]
    \label{thm:hm}
    Let $\mathcal{M}$ be any set of perfect matchings on $[N]$ that is pairwise edge-disjoint and satisfies $|\mathcal{M}| = \Omega(N)$. 
    Let $\mu$ be the distribution over inputs to HM in which Alice's input is uniform in $\{0,1\}^N$ and Bob's input is uniform in $\mathcal{M}$.
    Then, any deterministic one-way protocol for HM that errs with probability at most $1/8$ with respect to $\mu$ requires $\Omega(\sqrt{N})$ bits of communication.
\end{theorem}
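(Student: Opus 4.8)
The plan is to prove the $\Omega(\sqrt N)$ lower bound by fixing an arbitrary deterministic one-way protocol, analysing Bob's best possible behaviour conditioned on Alice's message, and showing that unless the message is long the protocol's advantage over random guessing is $o(1)$. Since the protocol is deterministic, Alice's message is a function $\mathrm{msg}:\{0,1\}^N\to\{0,1\}^c$, which partitions the cube into at most $2^c$ fibres $A_a = \mathrm{msg}^{-1}(a)$. Conditioned on the message being $a$, Alice's input $f$ is uniform on $A_a$, while Bob's matching $M$ is still uniform on $\mathcal{M}$ and independent of $f$. For a fixed pair $(a,M)$ Bob must commit to an edge $e=\{i,j\}\in M$ and a bit, and his optimal bit is the majority value of $f_i\oplus f_j$ over $A_a$; writing $\beta_e^{(a)} = \bigl|\mathbb{E}_{f\sim A_a}(-1)^{f_i\oplus f_j}\bigr|$ for the bias of this parity, his conditional error is exactly $\tfrac12\bigl(1-\max_{e\in M}\beta_e^{(a)}\bigr)$. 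Hence an error of at most $1/8$ with respect to $\mu$ forces the message-weighted quantity $\sum_a \Pr[a]\,\mathbb{E}_{M}\bigl[\max_{e\in M}\beta_e^{(a)}\bigr]$ to be at least $3/4$, and the entire proof reduces to upper bounding this expression.

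I would then pass to Fourier analysis on $\{0,1\}^N$. Writing $\alpha_a = |A_a|/2^N$ for the density of each fibre, the edge bias becomes $\beta_e^{(a)} = \alpha_a^{-1}\,\bigl|\widehat{\mathbf 1_{A_a}}(e)\bigr|$, where $e$ is read as a size-two subset. The crucial, and most delicate, step is to control how many edges can simultaneously carry large bias. The naive route, Parseval's identity $\sum_{|S|=2}\widehat{\mathbf 1_{A_a}}(S)^2\le\alpha_a$, yields only $\sum_e(\beta_e^{(a)})^2 = O(1/\alpha_a)$ and ultimately a lower bound of $\Omega(\log N)$, which is far from the truth. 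The fix, which is the heart of the argument, is the level-two hypercontractive (KKL-type) inequality: for a set of density $\alpha\le 1/e$ one has $\sum_{|S|=2}\widehat{\mathbf 1_A}(S)^2 = O\bigl(\alpha^2\log^2(1/\alpha)\bigr)$. Substituting this gives the much stronger estimate $\sum_e(\beta_e^{(a)})^2 = O(c_a^2)$ with $c_a=\log_2(1/\alpha_a)$, so the squared biases are bounded essentially independently of the ambient dimension. I expect this hypercontractive input to be the main obstacle: everything else is bookkeeping, but without the quadratic-in-density savings the $\sqrt N$ bound simply does not appear.

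With the edge-count under control I would finish as follows. Because the matchings in $\mathcal{M}$ are pairwise edge-disjoint, each edge lies in exactly one matching, so a single high-bias edge can spoil at most one matching; consequently $\Pr_M[\max_{e\in M}\beta_e^{(a)}\ge t]\le |\{e:\beta_e^{(a)}\ge t\}|/|\mathcal M|\le O(c_a^2)/(t^2|\mathcal M|)$ by Markov on the squared biases. Integrating this tail, $\mathbb{E}_M[\max_{e\in M}\beta_e^{(a)}] = \int_0^1\Pr_M[\max\ge t]\,dt = O\bigl(c_a/\sqrt{|\mathcal M|}\bigr)$ after optimising the split point (the few fibres with $\alpha_a>1/e$ are handled by the weaker Parseval bound and contribute only $O(1/\sqrt{|\mathcal M|})$). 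Averaging over messages and using $|\mathcal M|=\Omega(N)$, the advantage is at most $O\!\bigl(\tfrac{1}{\sqrt N}\sum_a\alpha_a c_a\bigr)$, and since $\sum_a\alpha_a c_a = \sum_a\alpha_a\log_2(1/\alpha_a)$ is the Shannon entropy of the message, which is bounded by $\log_2(\#\text{messages})\le c$, we obtain $\sum_a\Pr[a]\,\mathbb{E}_M[\max_{e\in M}\beta_e^{(a)}] = O(c/\sqrt N)$. Forcing this to be at least $3/4$ yields $c=\Omega(\sqrt N)$, which is the claim. The only loose ends to nail down are the precise constant in the KKL inequality and the boundary case $\alpha_a>1/e$, neither of which affects the asymptotics.
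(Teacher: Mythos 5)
The paper offers no proof of Theorem~\ref{thm:hm}---it is imported verbatim from the cited work of Bar-Yossef, Jayram and Kerenidis~\cite{bar:hm}---and your argument is essentially their proof: bound the level-two Fourier weight of each message fiber via hypercontractivity (the KKL-type inequality $\sum_{|S|=2}\widehat{\mathbf{1}_{A}}(S)^2 = O\bigl(\alpha^2\log^2(1/\alpha)\bigr)$), convert this into a bound of $O(c_a^2/t^2)$ on the number of pairs carrying bias at least $t$, and exploit pairwise edge-disjointness of the $\Omega(N)$ matchings so that each biased pair can spoil at most one matching, which forces $c=\Omega(\sqrt{N})$ once the advantage must exceed a constant. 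Your deviations are cosmetic and correct: you average over all fibers at once via the entropy identity $\sum_a \alpha_a \log_2(1/\alpha_a) \le c$ instead of restricting to fibers of near-maximal size, and you integrate the tail $\Pr_M[\max_{e\in M}\beta_e^{(a)}\ge t]$ to bound the expected maximum rather than thresholding the bias at a fixed constant.
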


Suppose the concept class in Definition~\ref{def:concepts} is  $(\epsilon, \delta, p_{\mathrm{succ}})$-measure-first learnable using a measure-first protocol given by $(M, A)$ with $\epsilon\cdot\delta > 7/8$ and $p_{\mathrm{succ}} > 0$.
Throughout the proof, we will show that the existence of such a measure-first learning protocol contradicts the classical hardness of HM outlined in Theorem~\ref{thm:hm}.
To do so, consider the HM problem with $\mathcal{M} = \{M_x \mid x \in \{0,1\}^n\}$, where
\begin{align}
\label{eq:matchings}
    M_x = \{(y, y\oplus x) \mid y \in \{0,1\}^n \}.
\end{align}
and note that $|\mathcal{M}| = N$. 
To solve this instance of the HM problem Bob first generates training data $T_x^M$ as in Eq.~\eqref{eq:cdata}.
Note that Bob can do so because he has knowledge of the bitstring $x$.
In particular, Bob can generate $f^{(i)}$ from $\{0,1\}^N$, compute $R_{f^{(i)}}(x)$ and pick an element $(y, b)$ from it.
Next, Alice applies the measure protocol $M$ to $\ket{\psi_f}$ for her input $f \in \{0,1\}^N$ and sends $\widehat{\psi}_f = M(\ket{\psi_f})$ to Bob.
Finally, Bob applies $A$ on the data $T_x^M$ he generated and Alice's input $\widehat{\psi}_f$ to obtain a sample $(x, y,b) \sim \widetilde{\pi}_x(\widehat{\psi}_f)$.
Since we assumed that $p_{\mathrm{succ}} >0$, we know that for any $x \in \{0,1\}^n$ there must exist training data $\hat{T}_x^M$ and internal randomization of the learning algorithm $A$ such that the polynomial-time quantum algorithm output by the protocol satisfies Eq.~\eqref{eq:success_criteria_c}.
Throughout the remainder of this proof, we assume Bob fixes this to be the training data and internal randomization he uses for his input $x$ (note that Bob can do so because this does not depend on the input of Alice).
Based on this fixed choice of training data and internal randomization we partition $\{0,1\}^N = \mathrm{F}^x_{\text{good}} \sqcup \mathrm{F}^x_{\text{bad}}$, where $\mathrm{F}^x_{\text{good}}$ denotes the set of functions $f$ for which 
\begin{align}
\label{eq:f_good}
|| \widetilde{\pi}_x(f) - \pi_x(f) ||_{TV} \leq 1 - \epsilon,
\end{align}
where $\widetilde{\pi}_x$ is the random function implemented by the quantum algorithm output by the protocol when using the training data $\hat{T^M_x}$ and internal randomization as above.
Moreover, we note $|\mathrm{F}^x_{\text{good}}| \geq \delta\cdot2^n$ by Eq.~\eqref{eq:success_criteria_c}.
Finally, due to Eq.~\eqref{eq:success_criteria_c} we find that the probability that $(y, b) \in R_f(x)$ is at least
\begin{align}    
\label{eq:prob_support}
\mathsf{Pr}\big((y, b) \in R_f(x) \big) \geq \epsilon
\end{align}
for all $f \in \mathrm{F}^x_{\text{good}}$.
In conclusion, we find that the above described protocol is a \emph{randomized} one-way communication protocol for HM with success probability at least $\epsilon$ for all inputs $(x, f)$ in the subset
\begin{align}   
    \label{eq:x}
    \mathcal{X} := \bigcup_{x \in \{0,1\}^n} \{x\} \times \mathrm{F}_{\text{good}}^x.
\end{align}
In the remainder of our proof, we let $A'(x, \widehat{\psi}_f)$ denote the protocol that Bob runs on his side (i.e., generating the training data $T_x^M$, running the algorithm $A$ on it, and drawing a sample from $\widetilde{\pi}_x(\widehat{\psi}_f)$).
Also, we ensure Bob does so using only classical randomized computation by  classically simulating the quantum algorithms.
Next, we use Yao's principle to show that the above randomized one-way communication protocol implies the existence of a deterministic one-way communication protocol that errs with probability at most $\epsilon\cdot\delta$ with respect to $\mu$ (which would violate Theorem~\ref{thm:hm} since $\epsilon\cdot\delta > 7/8$).
Let $\mathcal{A}$ denote the family of \emph{deterministic} protocols obtained by ``hardwiring'' all possible internal randomizations of the evaluation of $\widetilde{\pi}_x$ by $A'$, i.e.,
\begin{align}
\label{eq:algorithms}
\mathcal{A} = \{A'_r(., .) \mid r \in \{0,1\}^{\mathrm{exp}(n)}\}.
\end{align} 

Also, let $\mathds{X}$ be the random variable with values $(x, f)$ distributed according to the uniform distribution over $\mathcal{X}$, and let $\mathds{A}$ be the random variable over $\mathcal{A}$ where the $r$ is uniformly random.
Finally, we define the function $s: \mathcal{X} \times \mathcal{A} \rightarrow \mathbb{R}$ as
\begin{align}
\label{eq:s}
s( (x, f), A'_r(.,.)) = \mathds{1}\left[A'_r(x, \widehat{\psi}_f) \in R_f(x)\right].
\end{align}

\begin{theorem}[Yao's principle]
Let $\mathds{A}$ be a random variable with values in $\mathcal{A}$ as defined in Eq.~\eqref{eq:algorithms}, and let $\mathds{X}$ be a random variable with values in $\mathcal{X}$ as defined in Eq.~\eqref{eq:x}. 
Then,
\begin{align}
    \label{eq:yao}
    \min_{(x, f) \in \mathcal{X}}\mathbb{E}\left[s((x, f), \mathds{A})\right] \leq \max_{A'_r \in \mathcal{A}}\mathbb{E}\left[s(\mathds{X}, A'_r)\right]
\end{align}
where $s$ is the function defined in Eq.\eqref{eq:s}.  
\end{theorem}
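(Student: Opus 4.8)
The plan is to establish the inequality by the standard averaging argument underlying the ``easy'' direction of Yao's principle: a minimum over inputs is bounded by an average over inputs, the expectations over the two independent sources of randomness can be interchanged, and an average over protocols is bounded by a maximum over protocols. No minimax theorem or linear-programming duality is needed for this direction.

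First I would replace the worst-case input on the left-hand side by the $\mathds{X}$-average. Since $\mathds{X}$ is the uniform distribution on the finite set $\mathcal{X}$, and a minimum of a function over its domain never exceeds its expectation under any distribution supported on that domain, I get
\[
\min_{(x,f)\in\mathcal{X}} \mathbb{E}_{\mathds{A}}\big[s((x,f),\mathds{A})\big] \;\leq\; \mathbb{E}_{\mathds{X}}\Big[\mathbb{E}_{\mathds{A}}\big[s(\mathds{X},\mathds{A})\big]\Big].
\]
Next, because $\mathds{X}$ and $\mathds{A}$ are independent and both range over finite sets, both expectations are finite sums, so linearity of expectation lets me swap their order,
\[
\mathbb{E}_{\mathds{X}}\Big[\mathbb{E}_{\mathds{A}}\big[s(\mathds{X},\mathds{A})\big]\Big] \;=\; \mathbb{E}_{\mathds{A}}\Big[\mathbb{E}_{\mathds{X}}\big[s(\mathds{X},\mathds{A})\big]\Big].
\]
Finally I would replace the $\mathds{A}$-average by the best deterministic protocol: since $\mathds{A}$ is a distribution over $\mathcal{A}$, its expected value is at most the maximum attained on the support, giving
\[
\mathbb{E}_{\mathds{A}}\Big[\mathbb{E}_{\mathds{X}}\big[s(\mathds{X},\mathds{A})\big]\Big] \;\leq\; \max_{A'_r\in\mathcal{A}} \mathbb{E}_{\mathds{X}}\big[s(\mathds{X},A'_r)\big].
\]
Chaining the three displays proves the claimed inequality.

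There is no genuine obstacle here: the argument is a pure rearrangement, valid for any score function $s$, any input distribution, and any distribution over the deterministic protocols. The only points requiring care are that both averaging steps use the correct \emph{uniform} distributions on $\mathcal{X}$ and on $\mathcal{A}$ (so that ``min $\leq$ average'' and ``average $\leq$ max'' apply), and the conceptual observation that only this direction of Yao's principle is needed. In the surrounding proof, this is precisely the step that upgrades the per-input guarantee $\mathbb{E}_{\mathds{A}}[s((x,f),\mathds{A})]\geq\epsilon$ holding for every $(x,f)\in\mathcal{X}$ into the existence of a single deterministic protocol $A'_r$ whose average score over $\mathds{X}$ is at least $\epsilon$, which then clashes with the hardness of HM in Theorem~\ref{thm:hm}.
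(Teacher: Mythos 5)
Your proof is correct: the chain ``min $\leq$ average over $\mathds{X}$'', swap the two finite expectations, ``average over $\mathds{A}$ $\leq$ max'' is the standard proof of the easy direction of Yao's principle, and it is exactly the fact needed at this point of the paper. Note that the paper itself states this as a known theorem and gives no proof, so your argument simply supplies the routine justification the paper omits; the only quibble is that your closing caveat about needing the \emph{uniform} distributions is unnecessary, since (as you yourself observe earlier) both bounding steps hold for arbitrary distributions on $\mathcal{X}$ and $\mathcal{A}$ --- uniformity only matters for how the conclusion is used afterwards against the hardness of HM.
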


Observe that the quantity $\mathbb{E}\left[s(\mathds{X}, A'_r)\right]$ is precisely the \emph{success probability of the deterministic algorithm}
$A'_r \in \mathcal{A}$ \emph{with respect to the uniform distribution over} $\mathcal{X}$.
Thus, Eq.~\eqref{eq:yao} implies the existence of a deterministic algorithm $A'_r$ such that
\begin{align}
    \label{eq:yao1}
    \mathbb{E}\left[s(\mathds{X}, A'_r)\right] = \mathsf{Pr}_{(x, f) \sim \mathcal{U}(\mathcal{X})}\big(A'_r(x, \widehat{\psi}_f) \in R_f(x)\big) \geq \min_{(x, f) \in \mathcal{X}}\mathbb{E}\left[s((x, f), \mathds{A})\right].
\end{align}

Moreover, observe that the quantity $\min_{(x, f) \in \mathcal{X}}\mathbb{E}\left[s((x, f), \mathds{A})\right]$ is precisely the \emph{success probability of the randomized algorithm $A'$}, which we have previously shown to be at least $\epsilon$.
By combining this with Eq.~\eqref{eq:yao1} we find that
\begin{align}
    \mathsf{Pr}_{(x, f) \sim \mathcal{U}(\mathcal{X})}\big(A'_r(x, \widehat{\psi}_f) \in R_f(x)\big) \geq \epsilon.
\end{align}
Moreover, since $|\mathrm{F}^x_{\text{good}} |\geq \delta\cdot 2^n$ we find that
\begin{align}
    \mathsf{Pr}_{(x, f) \sim \mu}\big(A'_r(x, \widehat{\psi}_f) \in R_f(x)\big) \geq \epsilon \cdot \delta.
\end{align}
Finally, since $\epsilon \cdot \delta > 7/8$, this violates the classical hardness of HM outlined in Theorem~\ref{thm:hm}.

\end{proof}

\subsection{Restricting the input quantum states to pseudorandom phase states}
\label{subsec:unlearnability_pr}

A crucial limitation of the learning problem outlined in Section~\ref{subsec:problem_models} from a pragmatic perspective is that preparing a general phase state is intractable (i.e., not realized by polynomial-time processes). 
In particular, this raises the question of whether separations could persist for states that are prepared by (natural or artificial) polynomial-time processes.
To address this limitation, we show that the concept class in Definition~\ref{eq:concepts} remains not measure-first learnable, even when we constrain the input of the random functions $\pi_x$ to phase states of so-called \textit{pseudorandom functions}. 
Notably, phase states corresponding to appropriately chosen pseudorandom functions can be efficiently prepared. 
Our definition of pseudorandom functions is as follows.

\begin{definition}[Quantum-secure pseudorandom function (QPRF)~\cite{brakerski:pseudo}] \label{def:prf}
    Let $\mathcal{K} = \{\mathcal{K}_n\}_{n \geq 1}$ be an efficiently samplable key distribution, and let $\mathsf{PRF} = \{\mathsf{PRF}_n\}_{n \geq 1}$, $\mathsf{PRF}_n : \mathcal{K}_n \times \{0,1\}^n \rightarrow \{0,1\}$ be an efficiently computable function. 
    We say $\mathsf{PRF}$ is a quantum-secure pseudorandom function if for every efficient non-uniform quantum algorithm $A$ that can make quantum queries there exists a negligible function $\mathrm{negl}(.)$ such that for every $n\geq 1$:
    \begin{align}
        \label{eq:prf}
        \left|\mathsf{Pr}_{k \sim \mathcal{U}(\mathcal{K}_n)}\left[A^{\mathsf{PRF}_n(k, .)}() = 1 \right] - \mathsf{Pr}_{f \sim \mathcal{U}(\{0,1\}^n)}\left[A^{f}() = 1 \right]\right| \leq \mathrm{negl}(n).
    \end{align}
\end{definition}

We remark that if every function $\mathsf{PRF}_n$ admits a classical circuit of size $s(n)$ and depth $d(n)$, then one can prepare the corresponding phase states using a quantum circuit of size $\mathcal{O}(s(n))$ and depth $d(n) + 1$~\cite{brakerski:pseudo}.
Moreover, the existence of such $\mathsf{PRF}_n$ is implied by the existence of quantum secure one-way functions~\cite{zhandry:pr}.

\subsubsection{Fully-quantum learnability with pseudorandom phase states}
Note that when we constrain the inputs of $\pi_x$ to phase states of pseudorandom functions, we essentially modify the distribution over input states in Eq.~\eqref{eq:success_criteria_q} and Eq.~\eqref{eq:success_criteria_c}. 
This new distribution now only has support on phase states that are efficiently preparable.
While Proposition~\ref{prop:fully-quantum_learnable} examines general quantum phase states as input states (which are not typically efficiently preparable), we note that the fully-quantum learnability directly extends the setting where we limit ourselves to efficiently preparable phase states as well.
We summarize this observation in the following proposition (whose proof is the same as that of Proposition~\ref{prop:fully-quantum_learnable}).

\begin{restatable}{proposition}{fullyquantum}
\label{prop:fully-quantum_learnable_ps}
    Let $\mathcal{S}_{\mathrm{pr}} = \{\ket{\psi_{f^{(k)}}}^{\otimes \ell} \mid f^{(k)}(.) = \mathsf{PRF}_n(k,.),\text{ }k \in \mathcal{K}\}$, where $\mathsf{PRF}$ is a quantum-secure pseudorandom function with keys $\mathcal{K}$.
    The concept class in Definition~\ref{def:concepts} is $(0, 0, 1)$-fully-quantum learnable when the distribution over input states is uniform over $\mathcal{S}_{\mathrm{pr}}$.
\end{restatable}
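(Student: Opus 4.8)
The plan is to observe that the fully-quantum protocol from Proposition~\ref{prop:fully-quantum_learnable} never actually inspects the quantum states in any way that distinguishes pseudorandom phase states from Haar-random (or uniformly random) phase states, so restricting the input distribution to $\mathcal{S}_{\mathrm{pr}}$ changes nothing about its operation or its guarantees. Concretely, recall how that protocol works: it reads off $x$ from a single training sample $(x,y,b)$, then constructs the circuit $U_x$ as in~\cite{aaronson:qubit_coin}, and outputs the classical description of the quantum algorithm ``apply $U_x$ to the input state and measure in the computational basis.'' The correctness argument establishes that for \emph{every} fixed $f\in\{0,1\}^N$, measuring $U_x\ket{\psi_f}$ yields a sample distributed exactly as $\mathcal{U}(R_f(x))$, i.e., $\widetilde{\pi}_x(f) = \pi_x(f)$ with total-variation distance $0$.

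First I would note that this per-state exactness is the key point: the bound in Eq.~\eqref{eq:success_criteria_q} is achieved pointwise for each individual $f$, with $\epsilon = 0$, $\delta = 0$. Since $\mathcal{S}_{\mathrm{pr}} \subseteq \mathcal{S}_{\mathrm{phase}}$, every phase state $\ket{\psi_{f^{(k)}}}$ with $f^{(k)}(\cdot) = \mathsf{PRF}_n(k,\cdot)$ is just one of the functions $f\in\{0,1\}^N$ already covered by the pointwise guarantee. Therefore, for any $k\in\mathcal{K}$ we again have $\|\widetilde{\pi}_x(f^{(k)}) - \pi_x(f^{(k)})\|_{TV} = 0$, and so the success criterion of Definition~\ref{def:fully-quantum_eff} holds with $\epsilon = 0$, $\delta = 0$, $p_{\mathrm{succ}} = 1$ regardless of which distribution over input states is used — in particular for the uniform distribution over $\mathcal{S}_{\mathrm{pr}}$.

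The only subtlety worth flagging is efficiency: the protocol must remain a polynomial-time quantum algorithm under the new input distribution, and the output it produces must be the classical description of a polynomial-time quantum algorithm. This is immediate, since the construction of $U_x$ from~\cite{aaronson:qubit_coin} has size and time $\mathcal{O}(\mathrm{poly}(n))$ and does not reference the input distribution at all; the learner's running time and the size of its output circuit are unchanged. I do not anticipate a genuine obstacle here — the proof is essentially the observation that the guarantee of Proposition~\ref{prop:fully-quantum_learnable} is uniform over all $f$ and hence insensitive to narrowing the support of the input distribution. Consequently the statement follows by literally repeating the argument of Proposition~\ref{prop:fully-quantum_learnable}, which is why the proof is identical.
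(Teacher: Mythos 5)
Your proposal is correct and is essentially the paper's own argument: the paper explicitly states that the proof of this proposition ``is the same as that of Proposition~\ref{prop:fully-quantum_learnable}'', since the protocol there (read off $x$, build $U_x$, measure in the computational basis) achieves $\|\widetilde{\pi}_x(f) - \pi_x(f)\|_{TV} = 0$ pointwise for every $f \in \{0,1\}^N$, hence in particular for every pseudorandom $f^{(k)}$, exactly as you argue. Your additional remarks on $\mathcal{S}_{\mathrm{pr}} \subseteq \mathcal{S}_{\mathrm{phase}}$ and on efficiency being independent of the input distribution simply make explicit what the paper leaves implicit.
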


\subsubsection{Limitations of measure-first protocols with pseudorandom phase states}
In the last section, we discussed how the concept class in Definition~\ref{def:concepts} remains fully-quantum learnable when restricted to phase states of pseudorandom functions.
Conversely, in this section we show that this concept class also remains \textit{not} measure-first learnable when restricted to phase states of pseudorandom functions.

\begin{restatable}{theorem}{measurefirstpr}
\label{thm:measure-first_nonlearnable_pr}
    Let $\mathcal{S}_{\mathrm{pr}} = \{\ket{\psi_{f^{(k)}}}^{\otimes \ell} \mid f^{(k)}(.) = \mathsf{PRF}_n(k,.),\text{ }k \in \mathcal{K}\}$, where $\mathsf{PRF}$ is a quantum-secure pseudorandom function with keys $\mathcal{K}$.
    The concept class in Definition~\ref{def:concepts} is not $(\epsilon, \delta, p_{\mathrm{succ}})$-measure-first learnable for $\epsilon \cdot \delta \cdot p_{\mathrm{succ}} > c$ for any constant $c>7/8$ when the distribution over input states is uniform over $\mathcal{S}_{\mathrm{pr}}$.
\end{restatable}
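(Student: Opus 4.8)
The plan is to prove the statement by contradiction, reducing a successful measure-first learner over $\mathcal{S}_{\mathrm{pr}}$ to an efficient quantum algorithm that breaks the security of the $\mathsf{PRF}$ (Definition~\ref{def:prf}). Suppose $(M,A)$ is a measure-first protocol that $(\epsilon,\delta,p_{\mathrm{succ}})$-learns $\mathcal{C}$ over the uniform distribution on $\mathcal{S}_{\mathrm{pr}}$ with $\epsilon\cdot\delta\cdot p_{\mathrm{succ}} > c > 7/8$. I would build a quantum oracle algorithm $D$ that, given quantum query access to an unknown $g:\{0,1\}^n\to\{0,1\}$, proceeds as follows: (i) use $\ell$ phase-kickback queries to prepare $\ket{\psi_g}^{\otimes \ell}$ and set $\widehat{\psi}_g = M(\ket{\psi_g}^{\otimes\ell})$; (ii) draw $x \sim \mathcal{U}(\{0,1\}^n)$ and generate training data $T_x^M$ exactly as in Eq.~\eqref{eq:cdata} by sampling its own pseudorandom training functions and labels (this uses no queries to $g$); (iii) run $A$ on $(T_x^M,\widehat{\psi}_g)$ to obtain a sample $(x,y,b)\sim\widetilde{\pi}_x(g)$; (iv) query $g$ at $y$ and $y\oplus x$ and output $1$ iff $g(y)\oplus g(y\oplus x)=b$, i.e.\ iff $(y,b)\in R_g(x)$. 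Note that $D$ is a polynomial-time quantum algorithm making polynomially many quantum queries.

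Next I would analyze the two oracle regimes. When $g=\mathsf{PRF}_n(k,\cdot)$ with $k$ drawn from the key distribution, the state $\ket{\psi_g}^{\otimes\ell}$ is distributed exactly as a uniform draw from $\mathcal{S}_{\mathrm{pr}}$, so $D$ runs the learner in precisely the regime where the assumption applies. Peeling the three layers of the criterion in Eq.~\eqref{eq:success_criteria_c}---the training data and internal randomization are ``good'' with probability at least $p_{\mathrm{succ}}$; conditioned on that, the test function is good with probability at least $\delta$; and for a good function $\pi_x(g)$ is supported on $R_g(x)$, so the total-variation bound forces $\mathsf{Pr}[(y,b)\in R_g(x)]\geq\epsilon$---yields $\mathsf{Pr}[D=1\mid \mathsf{PRF}]\geq \epsilon\cdot\delta\cdot p_{\mathrm{succ}} > c$. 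When instead $g$ is a uniformly random function, $D$'s execution is literally a randomized one-way communication protocol for $\mathrm{HM}$ under $\mu$: Alice holds $g$ and sends the $\mathrm{poly}(n)$-bit message $\widehat{\psi}_g$, while Bob holds a uniformly random $x$ (hence a uniform matching $M_x$ from Eq.~\eqref{eq:matchings}), generates his own training data, and outputs $(y,b)$; the accepting event is exactly the event that Bob solves $\mathrm{HM}$. Since $|\widehat{\psi}_g|=\mathrm{poly}(n)=o(\sqrt{N})$, the Yao-principle argument already used in the proof of Theorem~\ref{thm:measure-first_nonlearnable}, together with Theorem~\ref{thm:hm}, gives $\mathsf{Pr}[D=1\mid \text{uniform}]\leq 7/8$. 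Combining the regimes, the distinguishing advantage of $D$ is at least $c-7/8$, a positive constant, contradicting Definition~\ref{def:prf}.

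The step I expect to be the main obstacle---and the reason the hypothesis is the product $\epsilon\cdot\delta\cdot p_{\mathrm{succ}}$ rather than just $\epsilon\cdot\delta$ as in Theorem~\ref{thm:measure-first_nonlearnable}---is handling the ``good'' training data correctly under the efficiency constraint. In the communication setting of Theorem~\ref{thm:measure-first_nonlearnable}, Bob knows $x$ and may simply hardwire training data and internal randomization that are good for that $x$, absorbing $p_{\mathrm{succ}}$ for free. Here $D$ must be an efficient, essentially uniform quantum algorithm: it cannot search for good training data, and it cannot store good data for all exponentially many $x$ as non-uniform advice while still using a (near-)uniform $x$, which is required to invoke the $\mu$-hardness of $\mathrm{HM}$. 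Consequently $D$ can only sample training data and randomization afresh, succeeding with the extra factor $p_{\mathrm{succ}}$, which is exactly why the threshold degrades to $\epsilon\cdot\delta\cdot p_{\mathrm{succ}}>7/8$. A secondary point requiring care is verifying that all of $D$'s queries can be implemented with quantum oracle access to $g$ (phase-state preparation via phase kickback, and the final two evaluations for the $R_g(x)$ check), so that the contradiction genuinely lands against the quantum-query security in Definition~\ref{def:prf}.
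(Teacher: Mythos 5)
Your proposal is correct and follows essentially the same route as the paper: the paper also builds an oracle distinguisher that samples $x$ uniformly, generates pseudorandom training data, runs $(M,A)$ on the phase state of the oracle function, checks membership in $R_f(x)$, and then contrasts the success probability $> c$ in the pseudorandom regime (where the factor $p_{\mathrm{succ}}$ enters exactly as you explain) against the $\leq 7/8$ bound in the truly random regime obtained by viewing the execution as a one-way communication protocol for HM. Your explicit treatment of the Yao/averaging step and of the oracle queries needed for the final $R_g(x)$ check is slightly more careful than the paper's terse presentation, but it is the same argument.
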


\noindent The main results of this section directly follows from combining Theorem~\ref{thm:measure-first_nonlearnable_pr} and Proposition~\ref{prop:fully-quantum_learnable_ps}.

\begin{corollary}[informal]
  If there exist quantum-secure pseudorandom functions, then there exist a quantum supervised learning problem with efficiently generatable quantum data, which cannot be learned by any measure-first protocol while there exist a fully-quantum protocol which satisfies the learning condition.   
\end{corollary}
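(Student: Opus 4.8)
The plan is to read the Corollary as the conjunction of a positive and a negative statement: the positive half is exactly Proposition~\ref{prop:fully-quantum_learnable_ps}, and the efficient preparability of the data is guaranteed by the remark following Definition~\ref{def:prf} (a PRF computable by a size-$s(n)$ circuit yields phase states preparable by a quantum circuit of size $\mathcal{O}(s(n))$). Hence essentially all of the work sits in the negative half, Theorem~\ref{thm:measure-first_nonlearnable_pr}, which I would prove by a reduction to the security of the QPRF that upgrades the general-state lower bound of Theorem~\ref{thm:measure-first_nonlearnable}. The idea is that a measure-first protocol succeeding on \emph{pseudorandom} phase states would, through an efficient oracle algorithm, behave measurably differently on pseudorandom versus truly random phase states; Theorem~\ref{thm:measure-first_nonlearnable} (equivalently, the Hidden Matching bound of Theorem~\ref{thm:hm}) forces failure on the truly random ones, and this gap is itself a distinguisher.

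Concretely, suppose toward a contradiction that a pair $(M,A)$ witnesses $(\epsilon,\delta,p_{\mathrm{succ}})$-measure-first learnability over $\mathcal{S}_{\mathrm{pr}}$ with $\epsilon\cdot\delta\cdot p_{\mathrm{succ}} > c > 7/8$. I would build an efficient quantum algorithm $\mathcal{D}$ with oracle access to a function $g:\{0,1\}^n\to\{0,1\}$ that (i) samples $x\sim\mathcal{U}(\{0,1\}^n)$; (ii) prepares $\ell$ copies of $\ket{\psi_g}$ by phase kickback and applies $M$ to obtain $\widehat{\psi}_g$; (iii) generates training data $T^M_x$ as in Eq.~\eqref{eq:cdata} by drawing keys $k^{(i)}$, preparing the corresponding pseudorandom phase states, applying $M$, and sampling labels $(x,y,b)$ using two pointwise evaluations of $\mathsf{PRF}_n(k^{(i)},\cdot)$; (iv) runs $A$ on $(T^M_x,\widehat{\psi}_g)$ to obtain a sample $(x,y,b)\sim\widetilde{\pi}_x(\widehat{\psi}_g)$; and (v) queries $g$ at $y$ and $y\oplus x$ and outputs $1$ iff $g(y)\oplus g(y\oplus x)=b$, i.e.\ iff the sample lies in $R_g(x)$ of Eq.~\eqref{eq:relation}.

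I would then compare the two oracle regimes. When $g=\mathsf{PRF}_n(k,\cdot)$ for uniform $k$, the test input is distributed as $\mathcal{S}_{\mathrm{pr}}$, so the learnability assumption applies: with probability $\geq p_{\mathrm{succ}}$ the run is good, on a $\geq\delta$ fraction of such $g$ the total-variation bound Eq.~\eqref{eq:success_criteria_c} holds, and on those the sample lands in $R_g(x)$ with probability $\geq\epsilon$ (since a $\mathrm{TV}$ distance $\leq 1-\epsilon$ forces weight $\geq\epsilon$ on the support of $\pi_x(f)$). Chaining these conditional bounds gives $\mathsf{Pr}[\mathcal{D}^{\mathsf{PRF}}=1]\geq \epsilon\cdot\delta\cdot p_{\mathrm{succ}} > c$. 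When $g$ is a truly uniform function, the object I have built is exactly a one-way protocol for Hidden Matching under the distribution $\mu$: Alice sends the $m=\mathrm{poly}(n)$-bit message $\widehat{\psi}_g$, and Bob, holding the matching $M_x$, performs arbitrary local computation (including generating the pseudorandom training data and running $A$) and outputs an edge with its parity. Since $m=o(\sqrt{N})$, Theorem~\ref{thm:hm} together with Yao's principle (averaging the internal randomness against the fixed distribution $\mu$, exactly as in the proof of Theorem~\ref{thm:measure-first_nonlearnable}) forces $\mathsf{Pr}[\mathcal{D}^{\mathrm{unif}}=1]<7/8$. The advantage $\mathsf{Pr}[\mathcal{D}^{\mathsf{PRF}}=1]-\mathsf{Pr}[\mathcal{D}^{\mathrm{unif}}=1]>c-7/8$ is then a positive constant, contradicting the negligible-advantage guarantee of Definition~\ref{def:prf}.

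The step I expect to be the crux is establishing that $\mathcal{D}$ is genuinely \emph{efficient}, since this is exactly what invokes QPRF security and is precisely where the argument departs from the unrestricted-communication proof of Theorem~\ref{thm:measure-first_nonlearnable}. Two ingredients carry this: each test copy $\ket{\psi_g}$ is produced with a single oracle (super)query by phase kickback, so the $\ell=\mathrm{poly}(n)$ copies cost $\mathrm{poly}(n)$ queries; and pseudorandom phase states are preparable in polynomial time by the remark after Definition~\ref{def:prf}, so the training-data states—and, for the Corollary, the data of the learning problem itself—are produced efficiently, with each label requiring only two pointwise PRF evaluations. This efficiency constraint is also why $p_{\mathrm{succ}}$ now enters the product $\epsilon\cdot\delta\cdot p_{\mathrm{succ}}$: unlike the communication reduction, where Bob may brute-force-search for the single best training set and internal randomness guaranteed by $p_{\mathrm{succ}}>0$, here $\mathcal{D}$ must use freshly sampled randomness and therefore only inherits a good run with probability $p_{\mathrm{succ}}$. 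Finally, combining Theorem~\ref{thm:measure-first_nonlearnable_pr} with Proposition~\ref{prop:fully-quantum_learnable_ps} yields the Corollary.
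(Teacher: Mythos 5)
Your proposal is correct and follows essentially the same route as the paper: the corollary is obtained by combining Proposition~\ref{prop:fully-quantum_learnable_ps} with Theorem~\ref{thm:measure-first_nonlearnable_pr}, and the latter is proved exactly as you describe, via an oracle distinguisher that trains the measure-first protocol on pseudorandom phase states, tests membership in $R_g(x)$, and derives a constant distinguishing advantage from the Hidden Matching lower bound versus the learnability assumption, contradicting QPRF security. Your write-up even makes explicit two points the paper leaves implicit, namely the Yao/averaging step behind Eq.~\eqref{eq:acc_ur} and why $p_{\mathrm{succ}}$ enters the product $\epsilon\cdot\delta\cdot p_{\mathrm{succ}}$ here but not in Theorem~\ref{thm:measure-first_nonlearnable}.
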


The proof of Theorem~\ref{thm:measure-first_nonlearnable_pr} is provided below, and we first present a concise overview of the proof here.
The main idea behind the proof is to illustrate that if the concepts are measure-first learnable when restricted to pseudorandom phase states, then the corresponding measure-first learning protocol can be harnessed to create a non-uniform quantum algorithm that is able to distinguish between truly random functions and pseudorandom functions.
More precisely, this ``distinguisher'' algorithm employs the measure-first learning protocol and evaluates its performance when applied to the phase state corresponding to the function it has been given oracular access to. 
Given that, in the proof of Theorem~\ref{thm:measure-first_nonlearnable}, we have established an upper bound on the generalization performance of any measure-first protocol for truly random phase states, and we assume that the measure-first protocol performs well on pseudorandom phase states, the outcomes of the ``distinguisher'' algorithm should be significantly different depending on whether it is provided oracular access to a truly random or pseudorandom function, contradicting the pseudorandomness assumption.



\begin{proof}[Proof of Theorem~\ref{thm:measure-first_nonlearnable_pr}]

Suppose the concept class in Definition~\ref{def:concepts} is $(\epsilon, \delta, p_{\mathrm{succ}})$-measure-first learnable with $p_{\mathrm{succ}}\cdot \epsilon\cdot\delta > c$ for a constant $c>7/8$ when the distribution over input states is uniform over $\mathcal{S}_{\mathrm{pr}}$ using a measure-first protocol given by $(M, A)$.
That is, for every $\pi_x \in \mathcal{C}$, with probability  at least $p_{\mathrm{succ}}$ we have
\begin{align}
\label{eq:success_criteria_pr}
\mathsf{Pr}_{k \sim \mathcal{U}(\mathcal{K}_n)}\left(|| \widetilde{\pi}_x(f) - \pi_x(f) ||_{TV} \leq 1 - \epsilon \right) \geq \delta,
\end{align}
where $f^{(k)}(.) = \mathsf{PRF}(k, .)$ and $\widetilde{\pi}_x$ is the randomized quantum function obtained from $A$ on input of the form
\begin{align}
    \label{eq:cdata_pr}
        T_x^M = \left\{\left(\widehat{\psi}_{f^{(k)}}, (x, y, b)\right) \; \big| \;(x, y,b)\sim\pi_x(f)\text{ and } k \sim \mathcal{U}\left(\mathcal{K}_n\right)  \right\}_{i=1}^{\text{poly}(n)}.
\end{align}

The main goal of the remainder of the proof is to show that the above assumptions violates the assumption that $\mathsf{PRF}$ is a quantum-secure pseudorandom function.
To achieve this, we devise a quantum algorithm, denoted as $A^f$, which is query access to a function $f$, and which will exhibit a significant difference in the probability of outputting 1 when provided with either a truly random function $f$ or a pseudorandom function $f^{(k)}$. 
In essence, $A^f$ will train a measure-first protocol on phase states of pseudorandom functions and evaluate its performance on the provided function $f$, outputting 1 if it produces a correct sample $(x, y, b)$ with $(y, b) \in R_f(x)$. 
Assuming our measure-first protocol can successfully learn the concepts for phase states of pseudorandom functions, $A^f$ will most likely output 1 when $f$ is pseudorandom. Conversely, if $f$ is truly random, then based on arguments similar to those used in the proof of Theorem~\ref{thm:measure-first_nonlearnable}, the measure-first learning protocol is likely to be incorrect, leading $A^f$ to most of the time output 0.
In particular, we consider the polynomial-time quantum algorithm $A^f$ that does the following:
\begin{enumerate}[label=(\arabic*)]
    \item Sample $x \sim \mathcal{U}(\{0,1\}^n)$.
    \item Generate a set of examples $T_x^M$ as in Eq.~\eqref{eq:cdata_pr}\footnote{Note that we can do so efficiently using a quantum algorithm since we only consider phase states of pseudo-random functions.}.
    \item Use the learning algorithm $A$ with set of examples $T_x^M$to obtain a quantum algorithm $A'$ for $\widetilde{\pi}_x$.
    \item Using quantum query access to $f$ prepare $\ket{\psi_f}^{\otimes \ell}$.\footnote{This step is also efficient both for random and pseudorandom function since we suppose oracle access to $f$.}
    \item Apply $M$ to $\ket{\psi_f}^{\otimes \ell}$ to obtain $\widehat{\psi}_f = M\left(\ket{\psi_f}^{\otimes \ell}\right)$.
    \item Apply $A'$ to $\widehat{\psi}_f$ to obtain a sample $(x, y, b)$ and output 1 if $y \in R_f(x)$, and 0 otherwise.
\end{enumerate}
By the Eq.~\eqref{eq:success_criteria_pr} and the paragraph leading up to it, we know that
\begin{align}
    \label{eq:acc_pr}
    \mathsf{Pr}_{k \sim \mathcal{U}(\mathcal{K}_n)}\left[A^{f^{(k)}} = 1 \right] \geq p_{\mathrm{succ}} \cdot \epsilon\cdot\delta > c.
\end{align}
On the other hand, from the classical lower bound for the HM problem in Theorem~\ref{thm:hm}, we know that 
\begin{align}
    \label{eq:acc_ur}
    \mathsf{Pr}_{f \sim \mathcal{U}(\{0,1\}^N)}\left[A^{f}(.) = 1 \right] \leq 7/8.
\end{align}

\noindent In particular, if Eq.~\eqref{eq:acc_ur} does not hold, then one can construct a one-way communication protocol for HM that succeeds with probability at least $7/8$ with respect to $\mu$ by having Bob perform steps $(2)-(3)$, having Alice perform steps $(4)-(5)$, and sending $\widehat{\psi}_f$ to Bob to perform step (6).
In summary, we conclude that the measure-first protocol, when trained on phase states of pseudorandom functions, cannot generalize well to truly random functions based on the lower-bound established for the HM problem in Theorem~\ref{thm:hm}. 
Moreover, given our assumption that the concept class $\mathcal{C}$ in Definition~\ref{def:concepts} is $(\epsilon, \delta, p_{\text{succ}})$-measure-first learnable on phase states of pseudorandom states, it has to generalize well to other pseudorandom states. 
This implies a distinctive behavior of the ``benchmarking algorithm'' $A^f$ when provided with access to either a pseudorandom function $f^{(k)}$ or a truly random function $f$.
In other words, we thus conclude that Eq.~\eqref{eq:acc_pr} and Eq.~\eqref{eq:acc_ur} are in contradiction with the assumption that $\mathsf{PRF}$ is a quantum-secure pseudorandom function.

\end{proof}

\section{Conclusion}
\label{sec:conclusion}

In our study, we explored the constraints and capabilities of learning from quantum data. 
We established a formal machine learning framework that contrasts two protocols: ``fully quantum'', which adjusts measurements based on data, and ``measure-first'' restricted by fixed initial (though arbitrarily powerful) measurements. 
In particular, we provided an example of a learning problem efficiently solved by a fully-quantum protocol but beyond the capabilities of measure-first protocols.
Moreover, we showed that this persists even when we limit the quantum states from those intractable to prepare to efficiently preparable quantum states. 
These findings underscore the crucial role of processing quantum data in machine learning, revealing scenarios where quantum advantages become evident. 
In particular, they imply that certain learning tasks inherently require the ``exponential capacity'' of quantum states, distinct from classical data. 

\paragraph{Acknowledgements}
VD and CG acknowledge the support of  the Dutch Research Council (NWO/ OCW), as part ofthe Quantum Software Consortium programme (project number 024.003.037). 
This work was supported by the Dutch National Growth Fund (NGF), as part of the Quantum Delta NL programme. 
This publication is also part of the project Divide \& Quantum  (with project number 1389.20.241) of the research programme NWA-ORC which is (partly) financed by the Dutch Research Council (NWO).

\appendix

\section{Proof of Proposition~\ref{prop:fully-quantum_learnable}}
\label{appendix:proof_fully-quantum_learnable}

\fullyquantum*

\begin{proof}

To prove that the concept class $\mathcal{C}$ in Definition~\ref{def:concepts} is fully-quantum learnable we will provide a fully-quantum protocol $A$ that does so successfully.
Suppose we are given training data $T_x$ of the form provided in Eq.~\eqref{eq:qdata}.
Firstly, the fully-quantum protocol $A$ reads out $x$ from one of the examples in $T_x$.
Next, it uses the construction of~\cite{aaronson:qubit_coin} to construct a circuit $U_x$ of size $\mathcal{O}(n)$ in time $\mathcal{O}(n)$ such that when measuring the state $\ket{\phi_{f,x}} = U_x \ket{\psi_f}$ in the computational basis it produces $(y, b) \in \{0,1\}^{n+1}$ such that $b = f(y) \oplus f(x \oplus y)$.
Finally, the learning protocol outputs the description of the POVM measurement
\begin{align}
    \Pi_x = \big\{U_x\ket{z}\bra{z}U_x^\dagger \text{  }\large|\text{  } z\in \{0,1\}^n\big\}
\end{align}
as by the above measuring $\Pi_x$ on an arbitrary phase state $\ket{\psi_f}$ implements $\pi_x$ with zero error.

While it may appear that little learning is occurring when we can readily extract $x$ from a single example in $T_x$, we can introduce varying degrees of learning by not appending the complete description of $x$ to the examples. 
Instead, we include only partial information about $x$ that still allow us to recover a full description of $x$ using a polynomial number of examples.
For instance, instead of appending $x$ to the examples we can append certain functions $g_i(x)$, where $g_i$ is drawn uniformly random from some set of  $\mathcal{G} = \{g_i\}_{i \in I}$.
For instance, for $i \in \{0,1\}^n$ we can consider functions like 
\begin{align}
\label{eq:cdot}
g_i(x) = \left(i\cdot x, i\right) \in \{0,1\}^{n+1},
\end{align}
where $x\cdot i = \sum_{j = 1}^n x_j \cdot i_j \mod 2$.
Another example of such a family of functions would be
\begin{align}
\label{eq:dlp}
g_i(x) = \left(\mathrm{DLP}(x)_i, i\right) \in \{0,1\}^{n+1},
\end{align}
where $\mathrm{DLP}_i(x)$ denotes the $i$th bit of the discrete logarithm of $x$ in a suitably chosen group.
For these functions, one can show that $x$ can be recovered with high probability from a polynomial number of evaluations of $g_i(x)$ for randomly chosen $g_i$ from $\mathcal{G}$. 
Moreover, functions similar to the $g_i$ in Eq.~\eqref{eq:dlp} require a quantum computer to be able to efficiently recover $x$~\cite{liu:dlp}.

\end{proof}

\bibliographystyle{unsrt}
\bibliography{main}

\end{document}